\def\BibTeX{{\rm B\kern-.05em{\sc i\kern-.025em b}\kern-.08em
    T\kern-.1667em\lower.7ex\hbox{E}\kern-.125emX}}
\newcommand{\oneLineIF}[2]{\STATE \algorithmicif\ #1 \algorithmicthen\ #2} 
\newcommand{\oneLineIFandElse}[3]{\STATE \algorithmicif\ #1 \algorithmicthen\ #2 \algorithmicelse\ #3} 
\newcommand{\paddingT}{\vskip -0.125in}
\newcommand{\paddingD}{\vskip -0.20in}
\begin{document}
\title{Efficient Discovery of Approximate Order Dependencies}

\author{Reza Karegar}
\affiliation{%
  \institution{University of Waterloo, CA}
}
\email{mkaregar@uwaterloo.ca}

\author{Parke Godfrey}
\affiliation{%
  \institution{York University, CA}
}
\email{godfrey@yorku.ca}

\author{Lukasz Golab}
\affiliation{%
  \institution{University of Waterloo, CA}
}
\email{lgolab@uwaterloo.ca}

\author{Mehdi Kargar}
\affiliation{%
  \institution{Ryerson University, CA}
}
\email{kargar@ryerson.ca}

\author{Divesh Srivastava}
\affiliation{%
  \institution{AT\&T Labs-Research, US}
}
\email{divesh@research.att.com}

\author{Jaroslaw Szlichta}
\affiliation{%
  \institution{Ontario Tech Univ, CA}
}
\email{jarek@ontariotechu.ca}


\renewcommand{\shortauthors}{}

\begin{abstract}

Order dependencies (\OD s) capture relationships
between ordered domains of attributes.
Approximate \OD s (\AOD s) capture such relationships
even when there exist exceptions in the data.
During automated discovery of \OD s,
\emph{validation} is the process
of verifying whether an $\OD$ holds.
We present an algorithm
for validating approximate \OD s
with significantly improved runtime performance
over existing methods for \AOD s,
and prove that it is \emph{correct} and has \emph{optimal} runtime. 
By replacing the validation step
in a leading algorithm
for approximate \OD\ discovery
with ours,
we achieve orders-of-magnitude improvements in performance.

\end{abstract}

\maketitle

\section{Introduction} \label{sec:intro}

\subsection{Motivation}
\emph{Functional dependencies} (\FD s) specify that the values of given attributes
\emph{functionally determine} the value
of a target attribute.
\emph{Order dependencies} extend \FD s
to state that, additionally,
the \emph{order} of tuples
with respect to the values from the domains of given attributes
determines the \emph{order}
of the values from the domain
of the target attribute.
Table \ref{table:taxes} shows a dataset with employee salaries. 
In this table,
the \OD\ that \A{sal}\ \emph{orders} \A{taxGrp}\ holds.
If one sorts the table by \A{sal},
it is sorted by \A{taxGrp}\ as well. 

An \OD\ implies the corresponding \FD; 
e.g., that \A{sal}\ orders \A{taxGrp}\ implies that 
\A{sal}\ \emph{functionally determines} \A{taxGrp}. 
\emph{Order compatibility} (\OC) captures the \emph{co-ordering}
aspect of an \OD\ 
\emph{without} the corresponding \FD. 
Any \OD\ can be equivalently represented
by a pair of an \OC\ and an \FD\
\cite{od-fundamentals2012}.
In Table \ref{table:taxes},
that \A{taxGrp}\ is \emph{order compatible} with \A{sal}\ holds
as there exists an order of the tuples
such that they are sorted both
by \A{taxGrp}\
and
by \A{sal}.
Note that $\A{taxGrp}$ does \emph{not} order $\A{sal}$ as an $\FD$ does not hold.

There has been recent work
to automate the \emph{discovery} of \OD s from data
\cite{fastod2017,fastod-vldbj,
	od-discovery-consonni2019,
	od-discovery-jinand2020,ORDER}.
In practice,
however,
such constraints rarely hold \emph{perfectly}
in the data.
Real data are dirty,
containing wrong and inconsistent values
that may violate semantically valid dependencies.
This motivates the need
for discovering \emph{approximate} \OD s (\AOD s), 
\OD s that hold in the data but with \emph{exceptions}. 
Discovered \OD s deemed semantically valid
can be used for data cleaning,
to detect erroneous tuples,
where measures are then taken to repair the errors~\cite{QTY18}.
Approximate \OD s are useful
even when the data are not dirty, 
as there can be exceptions to general rules.
\AOD s help avoid overfitting
by discovering more general and meaningful dependencies.

In Table~\ref{table:taxes},
tax is a fixed percentage of salary 
in each tax group;
i.e., one, three, or eight percent.  However, \A{perc}\ includes a concatenated zero in some rows due to data entry errors
(e.g., 10\% instead of 1\% in $\tup{t}_1$).
Because of this,
the \OC\ that \A{salary}\ is order compatible with \A{tax}\
does \emph{not} hold, 
even though this \OC\ is intended.
Similarly,
the \FD\ that \A{pos}, \A{exp} \ functionally determines \A{sal}\
does \emph{not} hold,
due to the exception
of tuples $\tup{t}_6$\ and $\tup{t}_7$,
two employees with the same position and years of experience
but having different salaries. 
With approximate \OD s,
we can still discover such concise and meaningful rules
in these instances. 

\begin{table}[t]
\begin{center}
\caption{Employee salaries}
\paddingT
\label{table:taxes}
\fontsize{8}{5}\selectfont
    \begin{tabular}{|l|l|l|l|l|l|l|l|l}
        \hline
        \A{\#} & 
        \A{pos} & \A{exp} & \A{sal} & \A{taxGrp} & \A{perc} & \A{tax} & \A{bonus} \Tstrut\\
        \hline \hline
        $\tup{t}_1$ & 
        sec & 1 & 20K & A &  10\% &  2K & 1K \Tstrut\\
        \hline
        $\tup{t}_2$ & 
        sec &  3 & 25K & A & 10\% &  2.5K & 1K \Tstrut\\
        \hline
        $\tup{t}_3$ & 
        dev &  1 & 30K & A & 1\% &  0.3K & 3K \Tstrut\\
        \hline
        \hline
        $\tup{t}_4$ & 
        sec &  5 & 40K & B & 30\% &  12K & 2K \Tstrut\\
        \hline
        $\tup{t}_5$ & 
        dev &  3 & 50K & B & 3\% &  1.5K & 4K \Tstrut\\
        \hline
        $\tup{t}_6$ & 
        dev &  5 & 55K & B & 30\% &  16.5K & 4K \Tstrut\\
        \hline
        $\tup{t}_7$ & 
        dev &  5 & 60K & B & 3\% &  1.8K & 4K \Tstrut\\
        \hline
        \hline
        $\tup{t}_8$ & 
        dev &  -1 & 90K & C & 8\% &  7.2K & 7K \Tstrut\\
        \hline
        $\tup{t}_9$ & 
        dir &  8 & 200K & C & 8\% &  16K & 10K \Tstrut\\
        \hline
    \end{tabular}
\end{center}
\paddingD
\end{table}

Approximate canonical \OD s were first introduced 
in \cite{fastod2017}. 
Their definition of \AOD s,
as is ours herein,
is based on the concept of ``tuple removal\@.''
Given a table and an \OD,
a \emph{removal set} is a set of tuples
which,
if removed from the table,
results in the \OD\ holding.
A \emph{minimal} removal set is
one with the smallest cardinality.
An \emph{approximation factor} can be defined
with respect to a table and an \OD,
as the ratio of the size of a minimal removal set
over the size of the table.
For instance,
for Table \ref{table:taxes} and the \OC\
that \A{pos}, \A{exp}\ is order compatible with \A{pos}, \A{sal},
the minimal removal set and the approximation factor are
$\{\tup{t}_8\}$
and
$1/9\approx0.11$,
respectively.

Given a table \T{r}\ and an approximation threshold $0\le \epsilon\le1$, 
the \emph{discovery problem} for \AOD s
is to find the complete set of minimal \emph{valid} \AOD s
in \T{r}\ w.r.t.\ $\epsilon$.
Exact \OD s are a special case of \AOD s
with an approximation factor of zero.
Given a table \T{r}, an \OD\ $\varphi$,
and a threshold $\epsilon$, 
the problem of 
\emph{validating} the candidate \OD\ as an \AOD\ involves verifying 
whether the approximation factor of $\varphi$,
denoted by $e(\varphi)$,
is less than or equal to $\epsilon$.

\subsection{Contributions}
The extension for \AOD\ discovery
in \cite{fastod2017,fastod-vldbj},
however,
is impractical due to its performance.
%
To validate a candidate $\AOC$ in the search,
they iteratively remove the tuple%
---%
or one of the tuples,
in the case of a tie%
---%
that causes the largest number of violations.
This has two weaknesses: the runtime is quadratic in the number of tuples, and it is not guaranteed to find a minimal removal set.


That it is quadratic makes
it prohibitively expensive to run on larger datasets. 
(The validation step for a candidate exact \OD\ has
a linear runtime in the number of tuples.)
So while the \OD\ discovery algorithm
in \cite{fastod2017,fastod-vldbj}
is shown to scale
to datasets with millions of tuples,
it is infeasible to run their adapted \AOC\ discovery algorithm
over even moderately sized datasets.
During benchmarking,
we found in some discovery runs
that more than 99\% of the running time
is spent on validating $\AOC$ candidates. 

That it deliberately does not guarantee finding a minimal removal set means
that the algorithm may overestimate 
the approximation factor of an \AOC\ candidate.
Thus, \emph{true} \AOC s
with respect to the approximation threshold
can be eliminated,
which means that the \AOC\ discovery algorithm is \emph{incomplete} (while the exact \OD\ discovery algorithm is complete).

In this paper,
we resolve this major bottleneck in \AOD\ discovery 
via an algorithm with optimal runtime
and guaranteed minimal removal set
for validating \AOC\ candidates.
This brings performance of \AOD\ discovery
on par with that of \OD\ discovery
while making the \AOD\ discovery complete.

The paper is structured as follows,
with the following key contributions.
In Section \ref{sec:prelim},
we provide background and discuss related work. 
In Sec. \ref{sec:approx:framework}
we illustrate the established \OD\ and \AOD\ discovery framework%
---%
which we then adapt herein%
---
and, in Sec. \ref{sec:approx:iterative},
the \emph{iterative validation algorithm}~\cite{fastod2017,fastod-vldbj} it employs.
In Sec. \ref{sec:approx:optimal},
we contribute a minimal and optimal validation algorithm based on longest increasing subsequences that decreases the runtime from quadratic to super-linear.
In Sec. \ref{sec:exp},
we present our experimental results,
with the following contributions.
%
%
%
			We demonstrate that \AOD\ discovery
			using our validation algorithm 
			scales to datasets with millions of tuples
			and tens of attributes
			(Exp-\ref{exp:scal-tup} and Exp-\ref{exp:scal-attr}).
			We compare our adapted \AOC\ discovery
			against the previous approach 
			and demonstrate that ours is orders of magnitude faster
			(Exp-\ref{exp:vs-iter-time}).
			As discovering \AOD s enables
			the application of pruning rules earlier
			than for discovering \OD s,
			\AOD\ discovery can be just as efficient,
			if not more so.
			%
			Our \AOD\ discovery algorithm gains up to 76\% improvement
			in runtime 
			compared against the (exact) \OD\ discovery algorithm
			(Exp-\ref{exp:vs-exact-time}).
			Given our \AOD\ discovery algorithm is complete,
			we discover more \AOD s,
			and \emph{semantically more general} \AOD s
			(thus, of higher quality).
			We show that we find more \AOD s, 
			both due to our better scalability
			and the minimality of our removal sets
			(Exp-\ref{exp:vs-iter-aoc} and Exp-\ref{exp:vs-exact-aoc}).
			%
In Section \ref{sec:conclusions},
we conclude with suggestions for future work. 
In Section~\ref{sec:appendix}, we prove our theorems.

\section{Preliminaries and Related Work} \label{sec:prelim}

\subsection{Definitions and Notation} \label{sec:prelim:notation}
$\R{R}$ denotes a relational schema,
$\T{r}$ represents a table instance, and 
$\tup{s}$ and $\tup{t}$ denote tuples.
$\A{A}$ and $\A{B}$ denote individual attributes and 
$\set{X}$ and $\set{Y}$ \emph{sets} of attributes. 
\emph{Lists} of attributes are presented using $\lst{X}$ and $\lst{Y}$; 
$\emptyLst$ denotes the empty list and 
\(\Lst{\A{A}}[\lst{T}]\) denotes a list
with \emph{head} attribute \(\A{A}\)
and
\emph{tail} list \(\lst{T}\).
Tuples 
$\proj{t}{\A{A}}$ and $\proj{t}{\set{X}}$ denote the projections of tuple $\tup{t}$
on $\A{A}$ and $\set{X}$, respectively. 
Wherever a set is expected but a list appears,
the list is cast to a set;
e.g., \(\proj{t}{\lst{X}}\) is equivalent
to \(\proj{t}{\set{X}}\).
$\lst{X}'$ represents an arbitrary
permutation of the values of a list $\lst{X}$ or set $\set{X}$.

\begin{definition} \label{def:nested-order}
\emph{(nested order)}
Let \(\lst{X}\) be a list of attributes
where \(\set{X} \in \R{R}\).
Given two tuples, $\tup{s}$ and $\tup{t}$,
\(\tup{s}\orel[\lst{X}]\tup{t}\) \emph{iff}

\begin{itemize}[nolistsep]
\item
    \(\lst{X} = \Lst{}\);
    {\em or}
\item
    \(\lst{X} = \Lst{\A{A}}[\lst{T}]\)
    and \(\tup{s}_{\A{A}} < \tup{t}_{\A{A}}\); or
\item
    \(\lst{X} = \Lst{\A{A}}[\lst{T}]\),
    \(\tup{s}_{\A{A}} = \tup{t}_{\A{A}}\),
    and
    \(\tup{s}\orel[\lst{T}]\tup{t}\).
\end{itemize}

Let \(\tup{s}\orelStrict[\lst{X}]\tup{t}\)
\emph{iff}
    \(\tup{s}\orel[\lst{X}]\tup{t}\) but
    \(\tup{t}\notOrel[\lst{X}]\tup{s}\).
\end{definition}

Next, we define order dependencies~\cite{od-fundamentals2012,fastod2017,fastod-vldbj, od-discovery-consonni2019, od-discovery-jinand2020,ORDER}.

\begin{definition} \label{def:order-dep}
\emph{(order dependency)}
Let $\lst{X}$ and $\lst{Y}$ be lists of attributes where
$\set{X}, \set{Y}\subseteq\R{R}$. 
$\orders{\lst{X}}{\lst{Y}}$ denotes an \emph{order dependency}, 
read as $\lst{X}$ \emph{orders} $\lst{Y}$. 
Table $\T{r}$ satisfies $\orders{\lst{X}}{\lst{Y}}$ 
($\T{r} \models \orders{\lst{X}}{\lst{Y}}$) iff,
for all $\tup{s}, \tup{t}\in\T{r}$, 
\(\tup{s}\orel[\lst{X}]\tup{t}\) implies \(\tup{s}\orel[\lst{Y}]\tup{t}\).
$\lst{X}$ and $\lst{Y}$ are \emph{order equivalent} 
(denoted as $\orderEquiv{\lst{X}}{\lst{Y}}$), iff 
$\orders{\lst{X}}{\lst{Y}}$ and $\orders{\lst{Y}}{\lst{X}}$.
\end{definition}

\begin{definition} \label{def:order-comp}
\emph{(order compatibility)}
Let $\lst{X}$ and $\lst{Y}$ be lists of attributes where
$\set{X}, \set{Y}\subseteq\R{R}$. 
$\lst{X}$ and $\lst{Y}$ are \emph{order compatible}, 
denoted as $\simular{\lst{X}}{\lst{Y}}$,
iff $\orderEquiv{\lst{XY}}{\lst{YX}}$.
\end{definition}

The order dependency $\orders{\lst{X}}{\lst{Y}}$ means that 
$\lst{Y}$’s values are monotonically
non-decreasing with respect to $\lst{X}$’s values. 
Therefore, if one orders the tuples by $\lst{X}$, they are 
also ordered by $\lst{Y}$.
The order compatibility ($\OC$) $\simular{\lst{X}}{\lst{Y}}$ means that there exists 
a total order
of the tuples in which they are ordered according to both $\lst{X}$ and $\lst{Y}$.

\begin{example}
In Table~\ref{table:taxes}, the $\OD$ 
$\orders{\A{sal}}{\A{taxGrp}}$ holds.
The $\OC$ 
$\simular{\A{taxGrp}}{\A{sal}}$
holds, even though the $\OD$ 
$\orders{\A{taxGrp}}{\A{sal}}$
does not.
\end{example}

$\OD$s have a strong correspondence with $\OC$s and $\FD$s.
An $\OD$ $\orders{\lst{X}}{\lst{Y}}$ holds \emph{iff} 
$\simular{\lst{X}}{\lst{Y}}$ ($\OC$) and 
$\fd{\set{X}}{\set{Y}}$ ($\FD$) hold.
This gives two sources of violations for $\OD$s: \emph{swap}s and \emph{split}s~\cite{od-fundamentals2012}.

\begin{definition}
\emph{(swap)} \label{definition:swap}
A \emph{swap} with respect to
$\OC$ $\oc{\lst{X}}{\lst{Y}}$
is a pair of tuples
$\tup{s}$ and $\tup{t}$
such that
$\tup{s} \prec_{\lst{X}} \tup{t}$
but
$\tup{t} \prec_{\lst{Y}} \tup{s}$.
\end{definition}

\begin{definition}
\emph{(split)} \label{definition:split}
A \emph{split} with respect to
$\FD$ $\fd{\set{X}}{\set{Y}}$
is a pair of tuples
$\tup{s}$ and $\tup{t}$
such that
$\tup{s}[\set{X}] = \tup{t}[\set{X}]$
but
$\tup{s}[\set{Y}] \ne \tup{t}[\set{Y}]$
\end{definition}


\begin{example}
In Table~\ref{table:taxes}, 
given the $\OD$ $\orders{\A{pos}, \A{exp}}{\A{pos}, \A{sal}}$,
tuples $\tup{t}_7$ and $\tup{t}_8$ constitute a swap
(the $\OC$ $\oc{\A{pos}, \A{exp}}{\A{pos}, \A{sal}}$), and
tuples $\tup{t}_6$ and $\tup{t}_7$ constitute a split
(the $\FD$ $\fd{\A{pos}, \A{exp}}{\A{pos}, \A{sal}}$).

\end{example}

\begin{definition} \label{def:equiv-class}
tuples $\tup{s}$ and $\tup{t}$ are \emph{equivalent}
w.r.t.\ set of attributes $\set{X}$ \emph{iff}
$\proj{\tup{s}}{\set{X}}$ $=$ $\proj{\tup{t}}{\set{X}}$.
An attribute set $\set{X}$ partitions tuples into 
\emph{equivalence classes} \cite{tane}. The \emph{equivalence class} 
of tuple $\tup{t}\in\T{r}$ w.r.t.\ $\set{X}$ is denoted by 
$\set{E}(\tup{t}_{\set{X}})$; i.e., $\set{E}(\tup{t}_{\set{X}})$ = 
$\{ \tup{s} \in \T{r}\mid\proj{s}{\set{X}}$ = $\proj{t}{\set{X}} \}$. 
Given a set of attributes $\set{X}$, a \emph{partition} of the table
with respect to $\set{X}$ is the set of all equivalence classes;
i.e., $\Pi_{\set{X}}$ = $\{ \set{E}(\tup{t}_{\set{X}})$ $|$ $\tup{t} \in \T{r} \}$.
\end{definition}


\begin{example}
In Table~\ref{table:taxes}, 
$\set{E}(\proj{\tup{t}[1]}{{\brac{\A{pos}}}})$ = 
$\set{E}(\proj{\tup{t}[2]}{{\brac{\A{pos}}}})$ = 
$\set{E}(\proj{\tup{t}[4]}{{\brac{\A{pos}}}})$ = 
$\{ \tup{t}_1,  \tup{t}_2, \tup{t}_4\}$, and 
$\Pi_{\A{pos}}$ = $\{ \{ \tup{t}_1,  \tup{t}_2, \tup{t}_4\},$ 
$\{ \tup{t}_3, \tup{t}_5, \tup{t}_6, \tup{t}_7, \tup{t}_8\},$
$\{ \tup{t}_9\} \}$.
\end{example}

\subsection{A Canonical Mapping} \label{sec:prelim:canonical}
A natural representation of $\OD$s relies on lists of attributes,
as in the ORDER BY statement in SQL, where the order of 
attributes in the list matters; e.g., 
the $\OD$
$\orders{\A{pos}, \A{sal}}{\A{pos}, \A{exp}}$
is different than
the $\OD$
$\orders{\A{pos}, \A{sal}}{\A{exp}, \A{pos}}$.
This is unlike $\FD$s, where the order of attributes 
does not matter, 
as with
the GROUP BY statement in SQL.
Working within this list-based representation, however, 
has led to discovery frameworks
with factorial worst-case runtimes in the number of attributes \cite{ORDER}.
Fortunately, lists are \emph{not} inherently necessary 
to express $\OD$s. 
In \cite{fastod2017,fastod-vldbj}, the authors rely on a polynomial
mapping of list-based $\OD$s into a logically \emph{equivalent} 
collection of set-based
\emph{canonical} $\OD$s to devise
a discovery framework with exponential worst-case runtime 
in the number of attributes and linear in the number of tuples.


\begin{definition} \label{def:canonical-oc}
\emph{(canonical order compatibility)}
Given a set of attributes $\set{X}$, 
$\simular{\lst{X}'\A{A}}{\lst{X}'\A{B}}$ is 
the $\OC$ that states that attributes $\A{A}$ and $\A{B}$
are \emph{order compatible} within each equivalence class
of $\set{X}$.
We write this as $\simularCtxSet{\set{X}}{\A{A}}{\A{B}}$
in the canonical notation, 
factoring out the common prefix,
and refer to this as a \emph{canonical} $\OC$.
\end{definition}

\begin{definition} \label{def:ofd}
\emph{(order functional dependency)}
Given a set of attributes $\set{X}$, the $\FD$ that states that
an attribute $\A{A}$ is \emph{constant} within each equivalence class
of $\set{X}$ is equivalent to the list-based $\OD$
$\orders{\lst{X}'}{\lst{X}'\A{A}}$. We write this as 
$\ordersCtxSet{\set{X}}{\emptyLst{}}{\A{A}}$
in the canonical notation,
and refer to this as an \emph{order functional dependency} ($\OFD$).
\end{definition}


Given
a canonical $\OC$ of $\simularCtxSet{\set{X}}{\A{A}}{\A{B}}$ or an
$\OFD$ of $\ordersCtxSet{\set{X}}{\emptyLst{}}{\A{A}}$, 
the set $\set{X}$ is referred to as the 
\emph{context} of the respective canonical $\OC$ or $\OFD$.
Intuitively, the context is the common prefix on the left- and right-side
of the corresponding list-based $\OC$ or $\OD$.


Canonical $\OC$s and $\OFD$s  
constitute the canonical $\OD$s; i.e., $\OD \equiv \OC + \OFD$.
The $\OD$ of $\orders{\lst{X}'\A{A}}{\lst{X}'\A{B}}$
is logically equivalent to the 
canonical $\OC$ of $\simularCtxSet{\set{X}}{\A{A}}{\A{B}}$ and
$\OFD$ of $\ordersCtxSet{\set{X}\A{A}}{\emptyLst{}}{\A{B}}$. 
This is 
$\ordersCtxSet{\set{X}}{\A{A}}{\A{B}}$
written in the canonical form.

\begin{example}
In Table~\ref{table:taxes}, 
$\A{sal}$ and $\A{bonus}$ are order compatible
w.r.t.\ the context $\A{pos}$; i.e., 
$\simularC{\A{pos}}{\A{sal}}{\A{bonus}}$.
In the same table, $\A{bonus}$ is constant
w.r.t.\ the context $\A{pos}, \A{sal}$; i.e., 
$\ordersC{\A{pos}, \A{sal}}{\emptyLst{}}{\A{bonus}}$. 
Therefore, $\A{sal}$ orders $\A{bonus}$ w.r.t.\ the context $\A{pos}$; i.e., 
$\ordersCtxSet{\{\A{pos}\}}{\A{sal}}{\A{bonus}}$.
\end{example}

This mapping generalizes:
an $\OD$ $\od{\lst{X}}{\lst{Y}}$ holds
\emph{iff}
\od{\lst{X}}{\lst{X}\lst{Y}}
\emph{and}
\oc{\lst{X}}{\lst{Y}}.   
These can be encoded into an equivalent set
of canonical $\OFD$s and $\OC$s as follows.
In the context of $\set{X}$,
all attributes in $\set{Y}$ must be constants. 
In the context
of all prefixes of $\lst{X}$ and of $\lst{Y}$,
the trailing attributes must be order compatible:

\begin{center}
\(
    \R{R} \models
        \od{\lst{X}}{\lst{X}\lst{Y}}
    \mathrel{\emph{iff}}
    \forall \A{A} \in \lst{Y}.\ 
    \R{R} \models
        \od[\set{X}]{\emptyLst}{\A{A}}
\)
and 
\end{center}


\begin{center}
\(
    \R{R} \models
        \oc{\lst{X}}{\lst{Y}}
    \mathrel{\emph{iff}}
\) 
\(
    \forall i,j.\ 
    \R{R} \models \oc[
                        \Lst{\A{X}_{1}, \ldots, \A{X}_{i-1}}
                        \Lst{\A{Y}_{1}, \ldots, \A{Y}_{j-1}}
                     ]
                     {\A{X}_{i}}
                     {\A{Y}_{j}}
\).
\end{center}

\noindent
Thus,
list-based $\OD$s 
can be polynomially \emph{mapped} to a set of \emph{equivalent} canonical $\OD$s; 
i.e., canonical $\OC$s and $\OFD$s \cite{fastod2017,fastod-vldbj}.

\begin{example}
The $\OD$ 
$\orders{\Lst{\A{A}, \A{B}}}{\Lst{\A{C}, \A{D}}}$
is equivalent to the following canonical $\OD$s:
$\ordersC{\A{A}, \A{B}}{\emptyLst{}}{\A{C}}$, 
$\ordersC{\A{A}, \A{B}}{\emptyLst{}}{\A{D}}$, 
$\simularC{}{\A{A}}{\A{C}}$, 
$\simularC{\A{A}}{\A{B}}{\A{C}}$, 
$\simularC{\A{C}}{\A{A}}{\A{D}}$, and
$\simularC{\A{A}, \A{C}}{\A{B}}{\A{D}}$.
\end{example}


While various algorithms have been proposed for discovering 
$\OD$s, 
most are not \emph{complete}. 
The algorithm described in \cite{ORDER} relies on the 
list-based definition and 
employs aggressive pruning rules to compensate for its 
factorial time complexity, but which make it deliberately incomplete. 
The authors in \cite{od-discovery-jinand2020} 
claim completeness 
but their algorithm misses
$\OD$s in which the same attributes are repeated on the 
left- and right-hand side.
A similar completeness claim has been made in \cite{od-discovery-consonni2019}, 
which is shown to be incorrect in \cite{erratum}.
The set-based $\OD$ discovery
algorithm proposed in \cite{fastod2017}
does offer a 
sound and complete discovery of $\OD$s.
We build our algorithm atop the 
framework introduced in \cite{fastod2017}.

In this work, we refer to canonical $\OC$s simply
as $\OC$s. As will be discussed in Section~\ref{sec:approx}, 
we focus on approximate $\OC$s ($\AOC$s), as an efficient algorithm for validating approximate 
$\OFD$s 
has already been established \cite{tane}. In Section~\ref{sec:approx:optimal}, 
we extend our validation algorithm to handle 
list-based $\OD$s as well.

\subsection{Definition of Approximate ODs} \label{sec:prelim:def}
We define approximate $\OD$s based upon the fewest tuples that must be removed from a 
table for an $\OD$ to hold.
This definition was used for canonical $\OD$s in \cite{fastod2017}; 
their validation step has a quadratic runtime. For
approximate $\FD$s, validation is possible in linear time \cite{tane}.

\begin{definition}
Given a table $\T{r}$ and an $\OD$ 
$\varphi$, 
a set of tuples $\T{s}$ is a \emph{removal set} w.r.t.\ $\varphi$ \emph{iff} 
$\T{r}\setminus \T{s}\models\varphi$.
Let $|\T{r}|$ denote the \emph{cardinality} of $\T{r}$, the number of tuples in $\T{r}$.
A removal set $\T{s}$ is a \emph{minimal} removal set \emph{iff}
it has the smallest 
cardinality over all removal sets; i.e., 
$|\T{s}|=\min(\{|\T{s}|\mid\T{s}\subseteq \T{r},\ \T{r}\setminus \T{s}\models\varphi\})$.
Given $\T{s}$, the \emph{approximation factor} $e(\varphi)$ is defined as 
$|\T{s}|/|\T{r}|$. 
\end{definition}

\begin{example}
Consider Table~\ref{table:taxes} and the $\OC$ of
$\simular{\A{sal}}{\A{tax}}$. 
Here, $\T{s}=\{\tup{t}_1, \tup{t}_2, \tup{t}_4, \tup{t}_6\}$ and
$e(\simular{\A{sal}}{\A{tax}})=4/9\approx0.44$, 
as $\T{r}\setminus\T{s}=\{\tup{t}_3, \tup{t}_5, \tup{t}_7, \tup{t}_8, \tup{t}_9\}$
does not contain any swaps with respect to $\simular{\A{sal}}{\A{tax}}$ 
and no smaller set $\T{s}'$ exists such that 
$\T{r}\setminus\T{s}'\models\simular{\A{sal}}{\A{tax}}$.
\end{example}

Given a table $\T{r}$ and
an approximation threshold $\epsilon$, $0\le\epsilon\le1$, 
the problem of discovering approximate
$\OD$s involves finding all minimal (non-redundant that follow from others) $\OD$s $\varphi$ such that 
$e(\varphi)\le\epsilon$. 
In this work, we focus on the problem of 
validating $\AOD$; i.e., 
verifying whether the approximation factor of a given $\AOD$ is 
less than or equal to a provided threshold. 
We present an optimal algorithm for doing so
and incorporate it into an existing $\OD$ discovery framework.

As discussed in Section~\ref{sec:prelim:canonical}, $\OC$s and $\OFD$s constitute
canonical $\OD$s; i.e., $\OD\equiv\OC+\OFD$. 
There already exists an efficient linear-time algorithm for validating
approximate $\OFD$s, as described in \cite{tane}.
In this work, we present an optimal validation algorithm for $\AOC$s.
Note that when discovering \emph{approximate} $\OC$s and $\OFD$s
given an approximation threshold $\epsilon$, 
$\OD\equiv\OC+\OFD$ does not necessarily hold. 
If approximate $\OC$ 
$\simularCtxSet{\set{X}}{\A{A}}{\A{B}}$ and $\OFD$
$\ordersCtxSet{\set{X}\A{A}}{\emptyLst{}}{\A{B}}$ hold with 
approximation factors $e_1, e_2\le \epsilon$, respectively, it is not 
guaranteed for the corresponding $\OD$ of
$\ordersCtxSet{\set{X}}{\A{A}}{\A{B}}$ to also hold 
approximately with respect to $\epsilon$. 
As to be discussed in Section~\ref{sec:approx:optimal}, however, 
our validation algorithm can easily be extended to validate
list-based approximate $\OD$s as well.

\section{Discovering Approximate OD's} \label{sec:approx}

In Sec.~\ref{sec:approx:framework}, we describe our framework to discover set-based canonical $\OD$s.
In Sec.~\ref{sec:approx:iterative}, we describe the iterative
validation algorithm proposed in \cite{fastod2017,fastod-vldbj}, 
analyze its runtime, and 
provide an example of it failing to find a minimal removal set
and thus overestimating the number of tuples that must be removed.
In Sec.~\ref{sec:approx:optimal}, 
we present our efficient validation algorithm,
based on the longest increasing subsequence (LIS)
problem, analyze its runtime, and prove its minimality and optimality.

\subsection{Discovery Framework} \label{sec:approx:framework}

The algorithm starts the search from singleton sets of attributes
and proceeds to traverse the set-based attribute lattice
in a level-wise manner~\cite{fastod2017,fastod-vldbj}. 
At each level, and when processing the attribute set $\set{X}$, 
the algorithm verifies
$\OC$s of the form
$\simularCtxSet{\set{X}\setminus\{\A{A}, \A{B}\}}{\A{A}}{\A{B}}$
for which $\A{A}, \A{B}\in\set{X}$ and $\A{A}\neq\A{B}$, and
$\OFD$s of the form 
$\ordersCtxSet{\set{X}\setminus\{\A{A}\}}{\emptyLst{}}{\A{A}}$ 
for which $\A{A}\in\set{X}$.

\begin{figure}[t]
    \center
    \scalebox{0.95}{
    \includegraphics[width=0.45\textwidth]{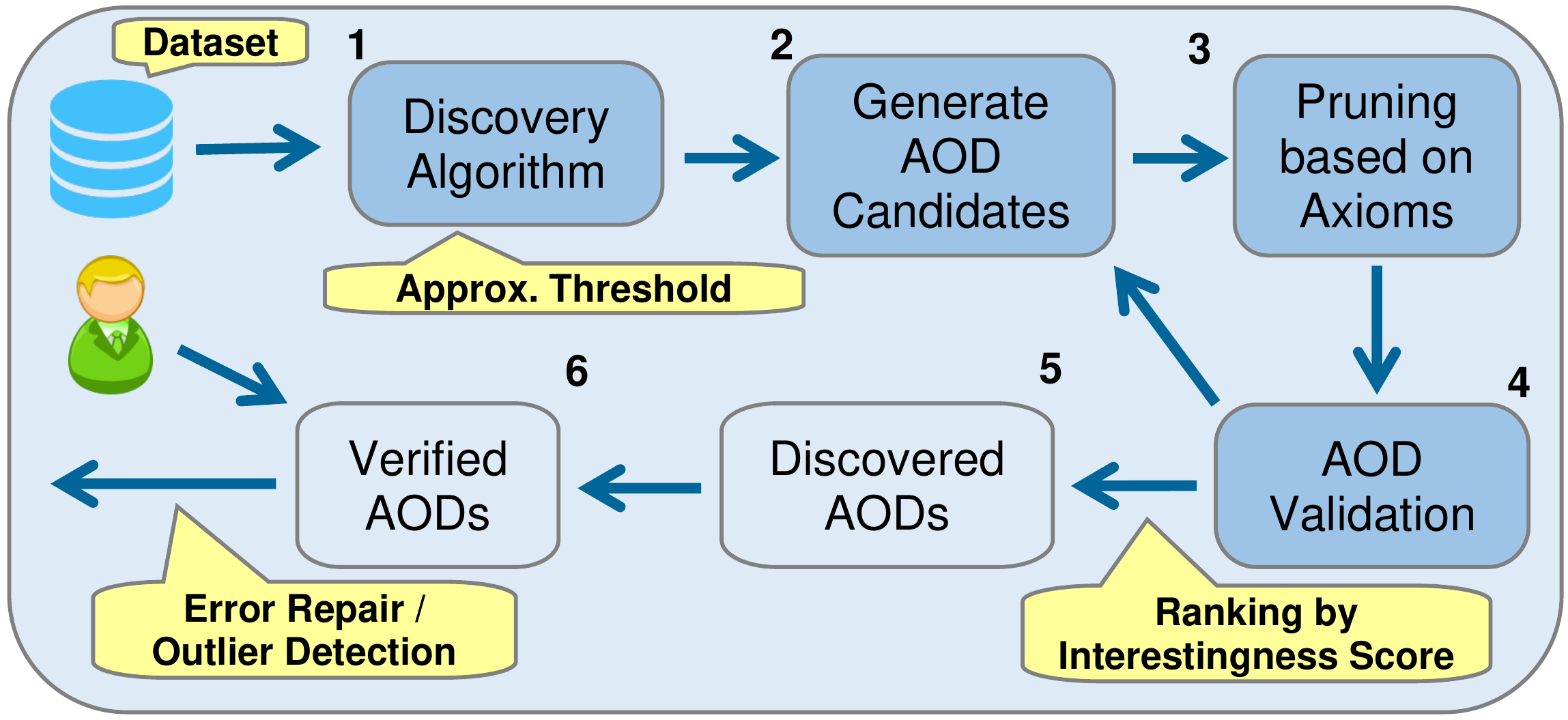}
    }
        \paddingT
     \caption{System framework.}
     \label{fig:framework}
      \paddingD
\end{figure}

Figure~\ref{fig:framework} illustrates the framework.
Candidate $\AOD$s are generated based on 
the attribute sets at the current level of the lattice.
Using the dependencies found in previous levels of the lattice,
these candidates are then pruned by axioms to avoid redundancy~\cite{fastod2017}. 
Our algorithm validates whether each candidate dependency 
holds approximately, given the approximation threshold as input. 
Valid $\AOD$s are then scored and ranked, using the 
measure of interestingness introduced in \cite{fastod2017}.
These discovered $\AOD$s can then 
be manually verified by domain experts, 
to be then used for tasks such as 
error repair or outlier detection, which is an easier task than manual specification.

\subsection{The Iterative Validation Algorithm} \label{sec:approx:iterative}
We first discuss the algorithm described in \cite{fastod2017,fastod-vldbj} 
to validate an approximate $\OC$ given a threshold $\epsilon$.
To validate an $\AOC$, the authors 
propose computing a removal set $\T{s}$ by
iteratively removing a tuple with 
the largest number of swaps, 
which does not guarantee to produce the minimal removal set.
This is repeated until either the $\OC$ holds or the 
number of removed tuples crosses the threshold
$\epsilon|\T{r}|$, in which case the $\AOC$ candidate
is considered invalid.
Note that after removing each tuple, the number of swaps
for the remaining tuples must be updated.

Algorithm~\ref{alg:iterative} validates a candidate
using the iterative approach. 
The steps in Lines \ref{line:iter:sort-by-ab} to
\ref{line:iter:end-remove} are repeated on tuples within
each equivalence class with respect to the context.
Line~\ref{line:iter:count-swaps} uses a variant of merge sort
to count the number of \emph{inversions} in the projection of sorted tuples 
over $\A{B}$, which is equivalent to the number of swaps for each tuple.
Line \ref{line:iter:drop-last} 
removes a tuple with the most swaps  
and Lines \ref{line:iter:begin-swap-for} to \ref{line:iter:end-swap-for}
update the number of swaps for the remaining tuples.
Line~\ref{line:iter:check-invalid} exits if the 
approximation threshold is crossed.

\begin{algorithm}[t]
\begin{small}
\caption{Approx-OC-iterative} 
\label{alg:iterative}
\noindent\textbf{Input:} Table $\T{r}$, $\OC$ $\simularCtxSet{\set{X}}{\A{A}}{\A{B}}$, 
and approximation threshold $\epsilon$. \hspace{1cm} \ \ \ \ \\
\noindent\textbf{Output:} Approximation factor $e$ and removal set $\T{s}$, or ``INVALID'' \ \ \ \ \ \ \ \ \phantom{} \vspace{-0.3cm}
    \begin{algorithmic}[1]
        \STATE $\T{s}=\{\}$
        \FORALL{$\set{E}\in\Pi_{\set{X}}$} \label{line:iter:begin-for}
            \STATE $\T{t}$ $=$ order $\set{E}$ by $[\A{A}$ ASC, $\A{B}$ ASC$]$ \label{line:iter:sort-by-ab}
            \STATE $\proj{\T{t}}{\A{swapCnt}}$ $=$ countInversions$(\proj{\T{t}}{\A{B}})$ \label{line:iter:count-swaps}
            \STATE order $\T{t}$ by $\A{swapCnt}$ ASC \label{line:iter:sort-swaps1}
            \WHILE{$\T{t}$ is not empty} \label{line:iter:begin-remove}
                \STATE $\tup{t}=\T{t}\text{.dropLast()}$ \label{line:iter:drop-last}
                \oneLineIF{$\proj{\tup{t}}{\A{swapCnt}} \mathrel{==} 0$}{\textbf{break}}
                \label{line:iter:check-zero}
                \FORALL{$\tup{s}\in\T{t}$} \label{line:iter:begin-swap-for}
                    \oneLineIF{$\proj{\tup{s}}{\A{A}, \A{B}}$ and $\proj{\tup{t}}{\A{A}, \A{B}}$ are swapped}{$\proj{\tup{s}}{\A{swapCnt}}\mathrel{-}= 1$}
                \ENDFOR \label{line:iter:end-swap-for}
                \STATE order $\T{t}$ by $\A{swapCnt}$ ASC \label{line:iter:sort-swaps2}
                \STATE add $\tup{t}$ to $\T{s}$ \label{line:iter:add-to-s}
                \oneLineIF{$|\T{s}| > \epsilon|\T{r}|$}{\algorithmicreturn\ ``INVALID''} \label{line:iter:check-invalid}
            \ENDWHILE \label{line:iter:end-remove}
        \ENDFOR \label{line:iter:end-for}
        \RETURN $|\T{s}|/|\T{r}|$, $\T{s}$
    \end{algorithmic}
 \end{small}
\end{algorithm}


\begin{example}
Consider Table~\ref{table:taxes} and the $\OC$
$\simular{\A{sal}}{\A{tax}}$. 
Tuple $\tup{t}_7$ has swaps with tuples 
$\tup{t}_1$, $\tup{t}_2$, $\tup{t}_4$, and $\tup{t}_6$,
which is more than any tuple in the table, 
and is thus removed.
In following steps, tuples 
$\tup{t}_5$, $\tup{t}_3$, $\tup{t}_6$, and $\tup{t}_4$
are removed.
Therefore, $\T{s}=\{\tup{t}_3, \tup{t}_4, \tup{t}_5, \tup{t}_6, \tup{t}_7\}$
is reported as a removal set for this $\OC$,
and the 
approximation factor is computed as  $5/9\approx0.56$.
This is larger than the actual approximation 
factor for this $\AOC$; i.e., $0.44$.
\end{example}


Let $m$ denote the number of tuples in an equivalence class.
Lines \ref{line:iter:sort-by-ab} to \ref{line:iter:sort-swaps1}
have runtime $\mathcal{O}(m\log m)$. Lines 
\ref{line:iter:drop-last} to \ref{line:iter:check-invalid}
inside the loop take $\mathcal{O}(m)$ time. Note that since the value of 
$\A{swapCnt}$ for each tuple is bounded by $m$, sorting the tuples in 
Line~\ref{line:iter:sort-swaps2} (as well as Line~\ref{line:iter:sort-swaps1})
can be done in $\mathcal{O}(m)$ time using counting sort. 
In the worst case, 
this loop is repeated $\epsilon n$ times, where $\epsilon$ and $n$ denote
the approximation threshold and the number of tuples in the table, respectively.
Therefore, in the worst case, where $m=n$, 
the runtime of this algorithm is 
$\mathcal{O}(n\log n+\epsilon n^2)$.


\subsection{Our Optimal Validation Algorithm} \label{sec:approx:optimal}
We now present Algorithm~\ref{alg:optimal} based on the 
\emph{longest increasing subsequence} (LIS) problem
to validate an $\AOC$ candidate. 
Lines \ref{line:lis:sort-by-ab} to \ref{line:lis:subtract} are repeated
for the tuples in each equivalence class with respect to the context.
Line~\ref{line:lis:sort-by-ab} orders the tuples
by $[\A{A}, \A{B}]$ in ascending order.
Next, Line~\ref{line:lis:compute-lis} finds a longest
non-decreasing subsequence (LNDS) 
of the projection of tuples over $\A{B}$.
(As $\OC$s are symmetric, we can also sort by $[\A{B}, \A{A}]$ and find a LNDS of projections over $\A{A}$.)
Line~\ref{line:lis:subtract} 
adds the tuples that are \emph{not} in the 
LNDS to the removal set.
Finally, Line~\ref{line:lis:check-thresh} checks whether the $\OC$ holds approximately
with respect to the threshold, and returns the appropriate output.

\begin{algorithm}[t]
\begin{small}
\caption{Approx-OC-optimal} 
\label{alg:optimal}
\noindent\textbf{Input:} Table $\T{r}$, $\OC$ $\simularCtxSet{\set{X}}{\A{A}}{\A{B}}$, 
and approximation threshold $\epsilon$. \hspace{1cm} \ \ \ \ \\
\noindent\textbf{Output:} Approximation factor $e$ and removal set $\T{s}$, or ``INVALID'' \ \ \ \ \ \ \ \ \phantom{} \vspace{-0.3cm}
    \begin{algorithmic}[1]
        \STATE $\T{s}=\{\}$
        \FORALL{$\set{E}\in\Pi_{\set{X}}$}  \label{line:lis:begin-for}
            \STATE $\T{t}$ $=$ order $\set{E}$ by $[\A{A}$ ASC, $\A{B}$ ASC$]$ \label{line:lis:sort-by-ab}
            \STATE $L$ $=$ computeLNDS$(\proj{\T{t}}{\A{B}})$ \label{line:lis:compute-lis}
            \STATE $\T{s}=\T{s}\cup (\proj{\T{t}}{\A{B}}\setminus L)$ \label{line:lis:subtract}
        \ENDFOR \label{line:lis:end-for}
        \oneLineIFandElse{$|\T{s}| \le \epsilon|\T{r}|$}{\algorithmicreturn\ $|\T{s}|/|\T{r}|$, $\T{s}$}{\algorithmicreturn\ ``INVALID''} \label{line:lis:check-thresh}
    \end{algorithmic}
 \end{small}
\end{algorithm}

\begin{example}
Consider Table~\ref{table:taxes} and the $\OD$
$\simular{\A{sal}}{\A{tax}}$. 
After ordering the tuples according to $\A{sal}$ and breaking ties by $\A{tax}$, 
the projection of the tuples over $\A{tax}$ is the list 
$[2K,$ $2.5K,$ $0.3K,$ $12K,$ $1.5K,$ $16.5K,$ $1.8K,$ $7.2K,$ $16K]$. 
The LNDS of this list is 
$[0.3K,$ $1.5K,$ $1.8K,$ $7.2K,$ $16K]$ and thus, the removal set is
$\T{s}=\{\tup{t}_1, \tup{t}_2, \tup{t}_4, \tup{t}_6\}$.
Thus, the approximation factor is $4/9\approx0.44$.
\end{example}

Again, let $m$ denote the number of tuples in an equivalence class. 
Sorting the tuples in each equivalence class takes 
$\mathcal{O}(m\log m)$ time (Line~\ref{line:lis:sort-by-ab}).
To compute a 
LNDS
of a list with length $m$, a dynamic programming algorithm 
from \cite{LIS-lower-bound-fredman1975}
with small modifications and
with runtime ${\mathcal O}(m\log m)$ 
is employed
(Line~\ref{line:lis:compute-lis}).
In Line~\ref{line:lis:subtract}, since $L$ is a subsequence of $\proj{\T{t}}{\A{B}}$, 
$\proj{\T{t}}{\A{B}}\setminus L$ can be computed in $\mathcal{O}(m)$ time by traversing 
both lists once. Therefore, the worst case runtime of this
algorithm, which occurs when $m=n$, is $\mathcal{O}(n\log n)$.


We now prove minimality and optimality of our algorithm.

\begin{theorem} \label{thm:oc-correctness}
The set $\T{s}$ generated using Algorithm~\ref{alg:optimal}
is a minimal removal set with respect to the given $\AOC$.
\end{theorem}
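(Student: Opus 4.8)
The plan is to reduce the global minimality claim to a per-equivalence-class statement, and then to characterize the removal sets within a single class as exactly the complements of non-decreasing subsequences, so that a longest such subsequence yields a minimal removal set.

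First I would observe that the canonical $\OC$ $\simularCtxSet{\set{X}}{\A{A}}{\A{B}}$ can only be violated by swaps between tuples lying in a \emph{common} equivalence class of $\Pi_{\set{X}}$: a swap requires $\proj{s}{\A{A}} < \proj{t}{\A{A}}$ together with $\proj{t}{\A{B}} < \proj{s}{\A{B}}$ within the same context, and tuples from distinct classes of $\set{X}$ never form a swap. Hence the set of all swaps is the disjoint union, over classes $\set{E} \in \Pi_{\set{X}}$, of the swaps internal to each $\set{E}$. Consequently a removal set is precisely a union of per-class removal sets, and a globally minimal removal set is obtained by minimizing \emph{independently} within each class and taking the union. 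This justifies the loop structure of Algorithm~\ref{alg:optimal}, which accumulates into $\T{s}$ one class at a time.

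Next, fixing a class $\set{E}$, I would establish the central equivalence. Let $(a_1,\ldots,a_m)$ and $(b_1,\ldots,b_m)$ be the $\A{A}$- and $\A{B}$-projections of $\set{E}$ after ordering by $[\A{A}\ \mathrm{ASC}, \A{B}\ \mathrm{ASC}]$ (Line~\ref{line:lis:sort-by-ab}), so $a_1 \le \cdots \le a_m$. The claim is that the $\OC$ holds on any retained subset iff the corresponding subsequence of $(b_1,\ldots,b_m)$ is non-decreasing. For the forward direction, suppose the retained $\A{B}$-subsequence has a descent at positions $i<j$, i.e.\ $b_i > b_j$. Since the tuples are sorted, $a_i \le a_j$; and if $a_i = a_j$ then the $\A{B}$ tie-break forces $b_i \le b_j$, contradicting $b_i > b_j$. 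Hence $a_i < a_j$ while $b_j < b_i$, which is exactly a swap, so the $\OC$ fails. Conversely, a non-decreasing $\A{B}$-subsequence has no descents and therefore no swaps, so the $\OC$ holds. The only delicate point is the handling of ties in $\A{A}$: equal-$\A{A}$ pairs can never constitute a swap (a swap demands strict $\A{A}$-inequality), so ordering them by $\A{B}$ is always consistent, and this is precisely why the secondary sort key $\A{B}$ makes ``no swap'' coincide with ``non-decreasing'' rather than ``strictly increasing.''

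From this equivalence, the retained subsets on which the $\OC$ holds are exactly the non-decreasing subsequences of $\proj{\T{t}}{\A{B}}$; equivalently, a set is a removal set iff its complement is such a subsequence. Minimizing $|\T{s}|$ within the class is therefore the same as maximizing the size of the retained non-decreasing subsequence, i.e.\ computing a longest non-decreasing subsequence $L$ (Line~\ref{line:lis:compute-lis}). Removing $\proj{\T{t}}{\A{B}} \setminus L$ (Line~\ref{line:lis:subtract}) thus deletes the fewest possible tuples in the class, and summing these per-class minima together with the decomposition from the first step shows the accumulated $\T{s}$ is globally minimal, proving the theorem. I expect the main obstacle to be the tie-handling in the central equivalence: one must argue carefully that sorting by $\A{B}$ inside equal-$\A{A}$ blocks neither fabricates nor misses violations, since a careless treatment of equal $\A{A}$-values, or of the equal $\A{B}$-values permitted by an $\OC$, would break the exact correspondence between removal sets and complements of non-decreasing subsequences.
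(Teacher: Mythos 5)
Your proposal is correct and follows essentially the same route as the paper's proof: reduce to a single equivalence class, sort by $[\A{A}\ \mathrm{ASC}, \A{B}\ \mathrm{ASC}]$, and establish the correspondence between swap-free retained subsets and non-decreasing subsequences of the $\A{B}$-projection (including the same tie-breaking argument), so that a longest such subsequence yields a minimal removal set. The only difference is presentational: you package the correspondence as one explicit biconditional, whereas the paper proves the two directions as separate contradiction arguments.
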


\begin{theorem} \label{thm:oc-optimality}
Algorithm~\ref{alg:optimal} has the optimal runtime for
validating an $\AOC$ candidate.
\end{theorem}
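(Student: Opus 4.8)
The plan is to prove optimality by establishing a matching $\Omega(n\log n)$ lower bound in the comparison (decision-tree) model, so that the $\mathcal{O}(n\log n)$ upper bound already shown for Algorithm~\ref{alg:optimal} cannot be improved. The lower bound comes from a reduction from the problem of computing the length of a longest increasing subsequence (LIS), for which \cite{LIS-lower-bound-fredman1975} proves an $\Omega(n\log n)$ lower bound on the number of comparisons. The reduction leans directly on the correctness result already in hand: by Theorem~\ref{thm:oc-correctness} the size of the minimal removal set, and hence the approximation factor, is determined by the LNDS length, so the validation task cannot be easier than computing that length.

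First I would set up the reduction. Given an instance $a_1,\ldots,a_n$ of the LIS length problem (distinct reals), I construct an $\AOC$ validation instance with empty context, so that all $n$ tuples lie in a single equivalence class, together with two attributes $\A{A}$ and $\A{B}$, setting $\tup{t}_i[\A{A}] = i$ and $\tup{t}_i[\A{B}] = a_i$. Sorting by $[\A{A}\text{ ASC}, \A{B}\text{ ASC}]$ (Line~\ref{line:lis:sort-by-ab}) recovers the input order, so the projection over $\A{B}$ is exactly $a_1,\ldots,a_n$, whose longest non-decreasing subsequence coincides with the LIS of the input since the values are distinct. By Theorem~\ref{thm:oc-correctness}, the minimal removal set has size $n-\ell$, where $\ell$ is the LIS length, so the approximation factor reported by Algorithm~\ref{alg:optimal} is $(n-\ell)/n$. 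Consequently, any algorithm that validates this $\AOC$ candidate and reports its approximation factor determines $\ell$, i.e., solves the LIS length problem. Because the $\A{A}$-values are fixed constants independent of the input, the only data-dependent comparisons performed are among the $\A{B}$-values; the reduction therefore introduces no comparison overhead and stays within the model of \cite{LIS-lower-bound-fredman1975}. Combining this with Fredman's bound shows that validating an $\AOC$ candidate requires $\Omega(n\log n)$ comparisons in the worst case, matching the runtime of Algorithm~\ref{alg:optimal}, so optimality follows.

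The main obstacle is bridging the gap between the decision formulation of validation (output VALID/INVALID according to whether $e(\varphi)\le\epsilon$) and the functional LIS length problem underlying Fredman's bound: a single validation call reveals only on which side of the threshold $\ell$ lies, not $\ell$ itself, so one cannot naively reduce LIS length computation to one decision call. I would address this by observing that Algorithm~\ref{alg:optimal} actually outputs the approximation factor (hence $\ell$) and framing the lower bound for this natural, stronger version of the task; alternatively, I would argue that since $\epsilon$ is part of the input, an adversary can fix $\epsilon$ so that the threshold $(1-\epsilon)n$ falls exactly at the boundary, forcing any correct validator to distinguish sequences of LIS length $\ell$ from those of length $\ell+1$, and then show via the same reduction that this boundary decision already inherits the $\Omega(n\log n)$ hardness. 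Care must also be taken to state the result in the comparison model, where the $\mathcal{O}(n\log n)$ bound is tight, since in stronger models equipped with integer-sorting primitives the lower bound need not hold.
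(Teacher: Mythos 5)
Your proposal is correct and follows essentially the same route as the paper: a linear-time reduction from Fredman's LIS lower bound to $\AOC$ validation by building tuples $\tup{t}_i=(i,b_i)$ with an empty context, so that the minimal removal set size is $n$ minus the LIS length. The decision-versus-function gap you worry about in your final paragraph is resolved in the paper exactly along the lines of your second workaround: Fredman's $\Omega(n\log n)$ bound is already stated for a decision variant (is $|L|\ge k=\lfloor 3n^{1/2}\rfloor$?), so the paper maps that decision problem directly to the validation decision with threshold $\epsilon = 1-k/n$ and proves the two-directional equivalence, with no need to appeal to the stronger ``report the approximation factor'' formulation.
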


Our validation algorithm easily extends to approximate 
$\OD$s of the form $\ordersCtxSet{\set{X}}{\A{A}}{\A{B}}$. 
We again use Algorithm~\ref{alg:optimal}, but in Line~\ref{line:lis:sort-by-ab}, tuples are ordered 
according to the ascending order 
over $\A{A}$, but ties are broken according 
to the \emph{descending} order over $\A{B}$.
Intuitively, this forces the solution to the LNDS problem 
in Algorithm~\ref{alg:optimal} to 
remove all \emph{split}s in the table 
(removal of \emph{swap}s is already ensured 
similar to Algorithm~\ref{alg:optimal} for approximate $\OC$s).
\footnote{This idea can be extended to list-based $\OD$s of 
the form $\orders{\lst{X}}{\lst{Y}}$, by ordering tuples 
in ascending order of $\lst{X}$ and breaking ties using the
descending order over $\lst{Y}$.
}

\newcounter{exp-c}
\setcounter{exp-c}{0}

\section{Experiments} \label{sec:exp}


We implemented our approximate $\OC$ validation algorithm on top of 
a Java implementation of the set-based $\OD$ discovery 
framework from \cite{fastod2017}.
We implemented our new LIS-based algorithm as well as 
the iterative algorithm 
using the same technologies to ensure that the improvements in runtime 
are not due to implementation differences. 
Unless mentioned otherwise, we set the approximation threshold to 10\% 
and use ten attributes.
We run our experiments on a machine with Xeon CPU 2.4GHz with 64GB RAM, and 
use datasets from the Bureau of Transportation Statistics
and the North Carolina State Board of Elections:

\begin{enumerate}
    \item \textbf{flight} contains information
    such as date, origin, destination, and airline about flights in the 
    United States and has 1M tuples and 35 attributes (\url{https://www.bts.gov}).
    \item \textbf{ncvoter} contains information such as 
    registration number, age, and address about voters 
    in North Carolina and has 5M tuples and 30 attributes (\url{https://www.ncsbe.gov}).
\end{enumerate}

\subsection{Scalability} \label{sec:exp:scal}
\refstepcounter{exp-c}\label{exp:scal-tup}
\textbf{Exp-1: Scalability in $|\T{r}|$.}
We measure the runtime (in seconds) of the $\AOD$ discovery framework 
that uses our validation algorithm by varying the number of tuples in our datasets, as reported in 
Figure~\ref{fig:scalrow}. For now, ignore the curves 
labeled ``$\OD$'' and ``$\AOD$ (iterative)'', as well as the numbers
next to the datapoints. 
The $\AOD$ discovery framework implemented using 
our optimal algorithm scales up to millions of tuples. 

\begin{figure}[t]
    \center
    \includegraphics[width=0.45\textwidth]{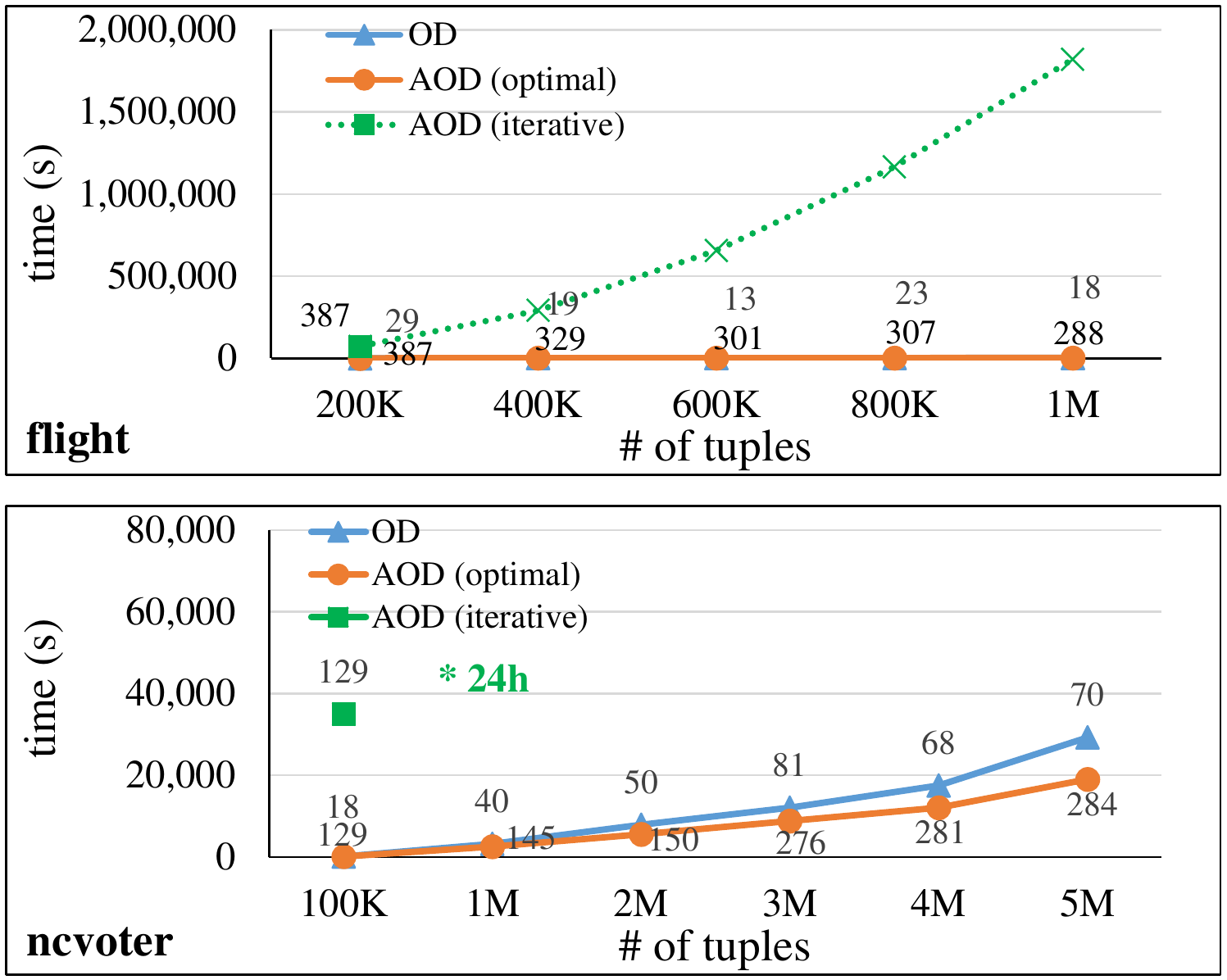}
    
        \paddingT
     \caption{Scalability in $|\textbf{r}|$.}
     \label{fig:scalrow}
      \paddingD
      \vspace{+0.1 in}
\end{figure}

\refstepcounter{exp-c}\label{exp:scal-attr}
\noindent\textbf{Exp-2: Scalability in $|\T{R}|$.}
Next, we measure the runtime of the discovery framework
in milliseconds, as illustrated in Figure~\ref{fig:scalattr}. 
We use 1K tuples of our datasets (to allow experiments
with a large number of attributes in reasonable time) 
and vary the number of attributes 
in multiples of five.
In this experiment, the runtime has an exponential growth
(the Y-axis in Figure~\ref{fig:scalattr} is in log scale). 
This is expected since the number of 
$\OD$s increases exponentially with the number of tuples. 
The higher runtime on $\sf{ncvoter}$ compared to $\sf{flight}$ is attributed to 
having more $\OD$s in higher levels of the lattice 
(with larger contexts).

\begin{figure}[t]
    \center
    \includegraphics[width=0.45\textwidth]{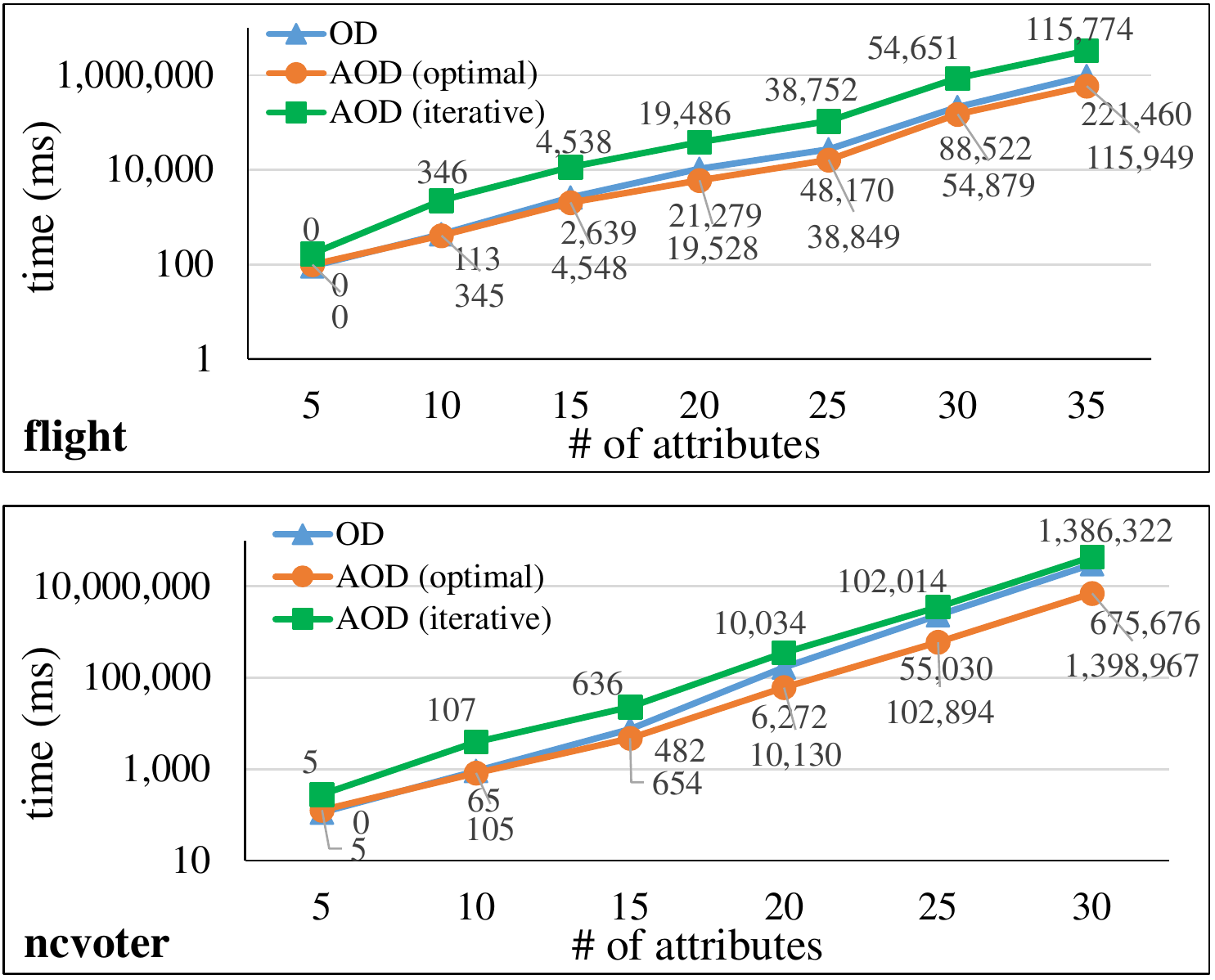}
    
        \paddingT
     \caption{Scalability in $|\textbf{R}|$.}
     \label{fig:scalattr}
      \paddingD
\end{figure}

\subsection{Comparison with the Iterative Algorithm} \label{sec:exp:comp-iter}

\refstepcounter{exp-c}\label{exp:vs-iter-time}
\textbf{Exp-3: Runtime comparison with the iterative algorithm.}
As discussed in Section~\ref{sec:approx}, our $\AOC$ validation algorithm has
time complexity $\mathcal{O}(n\log n)$, while the iterative algorithm 
proposed in \cite{fastod2017,fastod-vldbj} has time complexity 
$\mathcal{O}(n\log n+\epsilon n^2)$. 
Figures \ref{fig:scalrow}, \ref{fig:scalattr}, and \ref{fig:vsperc}
illustrate the running times
of the $\AOD$ discovery framework when using these 
two validation algorithms. 

As shown in Figure~\ref{fig:scalrow}, while when using our algorithm, the framework can 
discover $\AOC$s in datasets with up to millions of tuples, when using the 
iterative algorithm, it does not terminate within 24 hours on 
400K and 1M tuples of the $\sf{flight}$ and $\sf{ncvoter}$ datasets, respectively
(the running times for the $\sf{flight}$ dataset have been projected for 
better comparison). In cases where the framework 
equipped with the iterative algorithm 
terminates within the time limit, it is orders of magnitude slower. 
In Figure~\ref{fig:scalattr}, while the differences are not as 
pronounced (as the number of tuples is too small), 
using our validation algorithm still makes the framework
almost an order of magnitude faster. 

We next experiment with the approximation threshold, by
using 10K tuples from our datasets and setting
the approximation threshold to 0, 5, 10, 15, 20, and 25 percent.
As Figure~\ref{fig:vsperc} illustrates,
while a larger approximation threshold does not increase
the runtime of our algorithm
(the runtime decreases in some cases
due to better pruning opportunities), it increases
the runtime of the iterative approach at an almost linear rate. 
This aligns with the time complexity of 
these algorithms, as analyzed in Section~\ref{sec:approx}.

As mentioned in Section~\ref{sec:intro}, validating $\AOC$s
becomes the bottleneck of the $\AOD$ discovery framework 
when using the iterative algorithm. This is verified in our experiments, 
as up to 99.6\% 
of the total runtime is 
spent on validation.
Using our LIS-based validation algorithm, we reduce 
the time spent on validating $\AOC$s by up to 99.8\%, 
which results in the orders-of-magnitude improvement in runtime 
discussed before.


\begin{figure}[t]
    \center
    \includegraphics[width=0.45\textwidth]{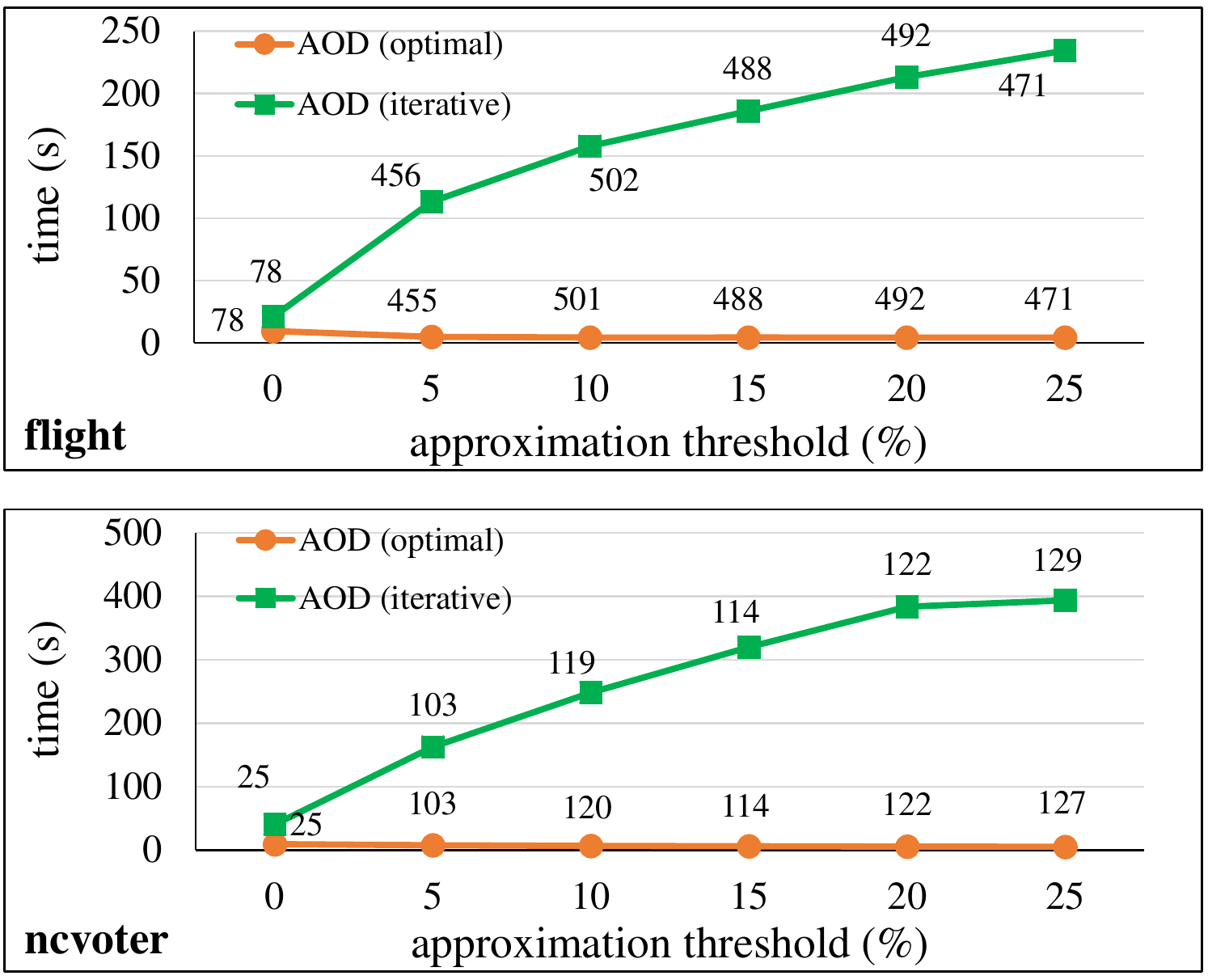}
    
        \paddingT
     \caption{The effect of the approximation threshold.}
     \label{fig:vsperc}
      \paddingD
\end{figure}

\refstepcounter{exp-c}\label{exp:vs-iter-aoc}
\noindent\textbf{Exp-4: Removal sets and validating $\AOC$s using the iterative algorithm.}
While our validation algorithm guarantees finding a minimal
removal set 
for a given $\OC$ (as is proved in Section~\ref{sec:approx:optimal}), 
the iterative algorithm may \emph{overestimate} the size of a minimal removal set. 
This results in removal sets which are on average around 1\% larger 
than the true minimal removal set. 

Overestimating the approximation factor may result in
missing valid $\AOC$s, if the true approximation factor 
is close to the input threshold.
In Figures \ref{fig:scalrow}, \ref{fig:scalattr}, and \ref{fig:vsperc}, 
the numbers inside the plots indicate the number of $\OC$s or $\AOC$s found
by an algorithm.
We have not listed the number of approximate $\OFD$s since 
this work focuses on discovering $\AOC$s. 
(Wherever the plots for our algorithm and the algorithm for exact $\OD$s
overlap, the numbers on the bottom correspond to our approach.)
In our experiments, the iterative approach
misses up to 2\% of the valid $\AOC$s 
found using our optimal approach. 

Missing these $\AOC$s could 
have potentially
severe consequen\-ces.
For instance, in the $\sf{flight}$ dataset, the $\AOC$ of 
$\simular{\A{arrivalDelay}}{\A{lateAircraftDelay}}$ holds with 
an approximation factor of 9.5\%. This $\AOC$ 
points out that generally, delays in arrival are due to 
the aircraft and not other causes; e.g., security or weather delays. 
However, the iterative algorithm overestimates the approximation factor as
10.5\%. This results in the framework missing this valid $\AOC$
when using an approximation threshold of 10\%.
Note that missing some $\AOC$s results in 
different pruning opportunities, and, as a result,
the set of discovered $\AOC$s, which explains why the 
iterative algorithm discovers more $\AOC$s in some cases.

Furthermore, as has been discussed for Exp-\ref{exp:vs-iter-time},
the running time of the iterative algorithm on larger 
datasets is prohibitively long.
On such datasets, using the iterative algorithm 
results in missing \emph{all} valid $\AOC$s.
For instance, in the $\sf{ncvoter}$ dataset with 5M tuples and with 
the approximation threshold set to 20\%, the $\AOC$ of
$\simular{\A{municipalityAbbrv}}{\A{municipalityDesc}}$
is discovered, which points to exceptions 
in creating abbreviations for municipalities; 
e.g., ``Raleigh'' is abbreviated as ``RAL'', while 
``Charlotte'' is abbreviated as ``CLT''.
However, this $\AOC$ does not hold in our 100K 
sample of tuples when using this threshold. 
Therefore, this dependency would have been missed 
by using the iterative validation algorithm, as it 
exceeds the time limit on the full dataset. 

\subsection{Comparison with 
Exact OD Discovery} \label{sec:exp:comp-exact}

\refstepcounter{exp-c}\label{exp:vs-exact-time}
\noindent\textbf{Exp-5: Lattice level of $\AOC$s and runtime improvements.}
$\AOC$s tend to reside
in lower levels of 
the lattice (with smaller contexts).
In our scalability experiments in the number of tuples (Exp-\ref{exp:scal-tup}), 
the $\AOC$s are on average 
$1.2$ levels lower on the lattice.
Similarly, in experiments in the number of attributes (Exp-\ref{exp:scal-attr}), 
the $\AOC$s are on average 
$0.5$ levels lower on the lattice.
Figure~\ref{fig:lattice-ncv} shows the number of $\OC$s or $\AOC$s found
at each level of the lattice, when using 5M tuples and 10 attributes
of the $\sf{ncvoter}$ dataset. On this dataset, 
the average lattice level of the discovered dependencies drops
from $5.6$ to $4.3$ when using our approximate algorithm.
As discussed in \cite{fastod2017} and \cite{fastod-vldbj}, 
dependencies found in lower levels of the lattice are likely to be more interesting.

\balance

Furthermore, as discussed in Section~\ref{sec:approx:framework}, 
our discovery framework first validates candidates on 
lower levels of the lattice, and then applies pruning rules
to generate the candidates on higher levels of the lattice.
Therefore, by finding $\AOC$s in lower levels, 
the algorithm can use pruning rules more effectively earlier 
in the discovery process,
resulting in pruning some candidates on higher levels of the lattice. 
The effects of such pruning opportunities are not noticed when 
using the iterative validation algorithm, 
due to its prohibitively long running time.
However, we optimally reduce the runtime of the 
validation step, resulting in runtime improvements
for the discovery framework.

Figures \ref{fig:scalrow} and \ref{fig:scalattr} show the 
running times of the algorithms for discovering exact and approximate $\OD$s.
Even though validation of $\AOC$s has a worse
runtime compared to exact $\OC$s, 
i.e., $\mathcal{O}(n\log n)$, 
as opposed to $\mathcal{O}(n)$, 
due to the extra pruning opportunities described above, 
the total runtime of the discovery framework for $\AOD$s
can even be lower than the discovery framework for exact $\OD$s;
i.e., up to 34\% and 76\% faster in experiments in the number of tuples 
and attributes, respectively. 
The pronounced effect in the experiments in the number 
of attributes is due to having 
a smaller number of tuples. 

\begin{figure}[t]
    \center
    \includegraphics[width=0.45\textwidth]{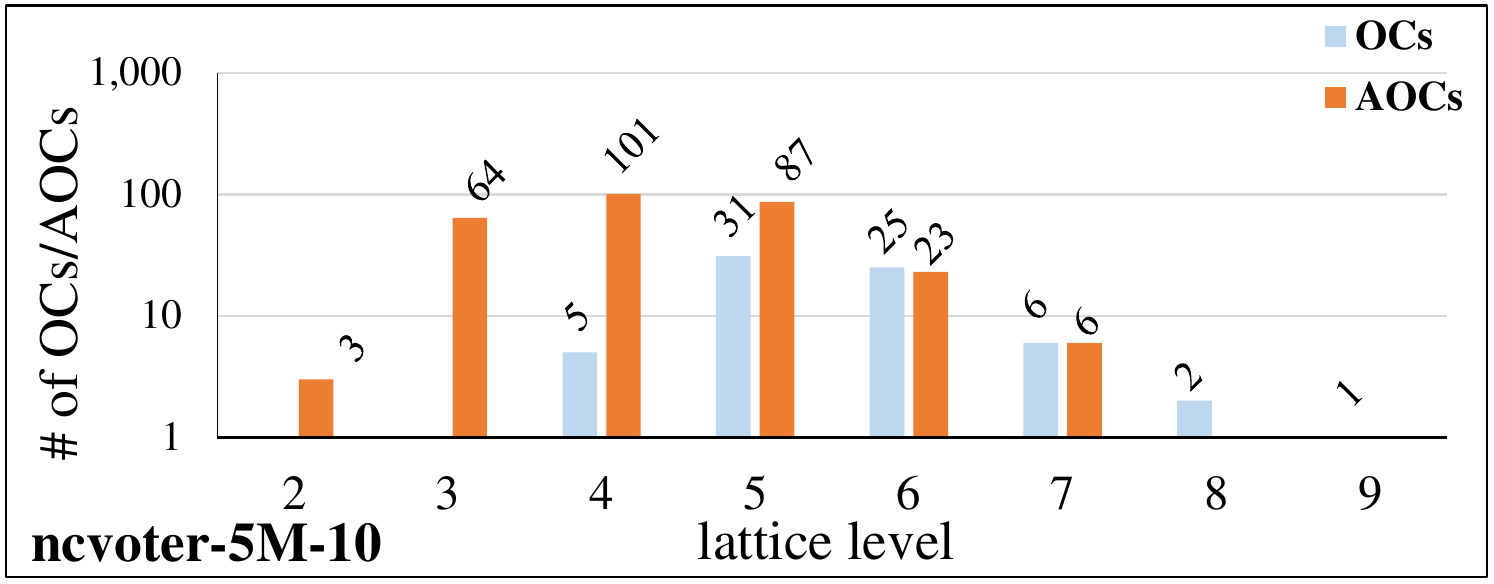}
        \paddingT
     \caption{Number of discovered OCs/AOCs in each level.}
     \label{fig:lattice-ncv}
      \paddingD
      \vspace{+0.1 in}
\end{figure}

\refstepcounter{exp-c}\label{exp:vs-exact-aoc}
\noindent\textbf{Exp-6: Discovered $\AOC$s compared to $\OC$s.}
The exact algorithm fails to discover meaningful $\OC$s in presence of anomalies, 
or even if a single value is erroneous. 
However, valid $\AOC$s may hold in
such instances. 
Other than the $\AOC$s discussed 
in Exp-\ref{exp:vs-iter-aoc}, in the $\sf{flight}$ dataset,
we discover the $\AOC$ 
$\simular{\A{originAirport}}{\A{IATACode}}$
with an 8\% approximation factor.
This $\AOC$ can be used to identify data quality issues, 
as the airport identifier must uniquely correspond to the 
IATA code in ascending order.
Furthermore, the $\AOC$
$\simular{\A{streetAddress}}{\A{mailAddress}}$
holds in the $\sf{ncvoter}$ dataset with an approximation factor of 18\%. 
This $\AOC$ can point to exceptions in address formats.

As shown in Figures \ref{fig:scalrow} and \ref{fig:scalattr}, 
by discovering $\AOC$s, we can find more dependencies in the data.
Even if there are fewer $\AOC$s than 
$\OC$s (e.g., the $\sf{flight}$ dataset in Exp-\ref{exp:scal-attr}), 
the discovered dependencies are on lower levels of the lattice, 
as shown in Exp-\ref{exp:vs-exact-time}, 
which makes them more interesting \cite{fastod2017,fastod-vldbj}.
If the number of discovered dependencies 
is too large, the interestingness measure proposed in \cite{fastod-vldbj}
can be used to rank the $\AOC$s.
In fact, the example $\AOC$s 
that we have identified in Exp-\ref{exp:vs-iter-aoc} and 
in this experiment, were all ranked as the most
interesting $\AOC$s based on this measure. 

\section{Conclusions} \label{sec:conclusions}
We proposed a new validation algorithm 
for approximate $\OD$s 
and proved its minimality and runtime optimality. We then implemented
our approach in an existing canonical $\OD$ discovery framework and 
demonstrated significant gains compared to existing 
frameworks for discovering exact and approximate $\OC$s.
In future work, we will study new approaches for 
discovering approximate $\OC$s, such as hybrid sampling, as 
done in \cite{hybrid} for $\FD$s.
We will also extend our approximate 
$\OC$ discovery framework to distributed settings, 
similar to the work in \cite{distributed-dependency-discover}.

\bibliographystyle{ACM-Reference-Format}
\bibliography{bibliography}

%

\newpage
\section{Appendix} \label{sec:appendix}

\begin{theorem} \label{thm:oc-correctness2}
The set $\T{s}$ generated using Algorithm~\ref{alg:optimal}
is a minimal removal set with respect to the given $\AOC$.
\end{theorem}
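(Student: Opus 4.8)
The plan is to reduce validation of the $\AOC$ to the classical problem of deleting the fewest elements from a sequence to make it non-decreasing, and then invoke the standard fact that this minimum equals the sequence length minus the length of a longest non-decreasing subsequence (LNDS).

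First I would observe that the problem decomposes across equivalence classes. Since the canonical $\OC$ $\simularCtxSet{\set{X}}{\A{A}}{\A{B}}$ only constrains tuples sharing the same $\set{X}$-values, any two tuples in distinct classes of $\Pi_{\set{X}}$ differ on $\set{X}$ and hence cannot form a swap w.r.t.\ this $\OC$. Thus a set of tuples satisfies the $\OC$ iff, restricted to each class $\set{E} \in \Pi_{\set{X}}$, it contains no swap, and a minimal removal set is the disjoint union of per-class minimal removal sets. This justifies processing each class independently, as the algorithm does.

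Second---the crux---I would characterize swaps within a single class after the sort in Line~\ref{line:lis:sort-by-ab}. After ordering by $[\A{A}\text{ ASC}, \A{B}\text{ ASC}]$, take any two tuples with the earlier one $\tup{u}$ preceding the later $\tup{v}$. I claim $\{\tup{u},\tup{v}\}$ is a swap iff $\proj{\tup{u}}{\A{B}} > \proj{\tup{v}}{\A{B}}$, i.e.\ iff it is an inversion in the $\A{B}$-projection. For the forward direction, a swap forces $\proj{\tup{u}}{\A{A}} < \proj{\tup{v}}{\A{A}}$ and $\proj{\tup{v}}{\A{B}} < \proj{\tup{u}}{\A{B}}$; the opposite orientation is impossible, since $\tup{u}$ precedes $\tup{v}$ gives $\proj{\tup{u}}{\A{A}} \le \proj{\tup{v}}{\A{A}}$. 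For the converse, an inversion rules out $\proj{\tup{u}}{\A{A}} = \proj{\tup{v}}{\A{A}}$, because ties on $\A{A}$ are broken by ascending $\A{B}$, so $\proj{\tup{u}}{\A{A}} < \proj{\tup{v}}{\A{A}}$, yielding exactly a swap. Consequently, a subset of the class is swap-free iff its $\A{B}$-projection is non-decreasing in the sorted order.

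Finally, since removing tuples leaves the relative sorted order of the survivors intact, the survivors form a subsequence of the sorted sequence, and by the previous step the $\OC$ holds on them iff their $\A{B}$-projection is a non-decreasing subsequence. The largest such survivor set is precisely a LNDS $L$ of $\proj{\T{t}}{\A{B}}$, so the fewest removals within the class is $|\set{E}| - |L|$, exactly what Line~\ref{line:lis:subtract} removes; summing over classes yields a global minimal removal set. I expect the main obstacle to be the swap--inversion equivalence: one must argue both directions and lean on the $\A{B}$-ascending tie-break to ensure equal-$\A{A}$ pairs never produce inversions, which is what makes every inversion a genuine swap and every genuine swap an inversion.
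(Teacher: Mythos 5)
Your proposal is correct and takes essentially the same approach as the paper: both reduce each equivalence class to the fact that, after sorting by $[\A{A}\text{ ASC}, \A{B}\text{ ASC}]$, swap-free survivor sets correspond exactly to non-decreasing subsequences of the $\A{B}$-projection (with the $\A{B}$ tie-break handling equal-$\A{A}$ pairs), so the complement of an LNDS is a minimal removal set. The paper merely packages the two directions of your swap--inversion equivalence as two separate contradiction arguments (removal-set property and minimality).
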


\begin{proof} \label{proof:oc-correctness}
Since in the $\AOC$ validation problem 
the tuples within different partition groups with respect to the 
context are independent of each other. Without loss of generality, 
assume the $\OC$ candidate has an empty context and let
$\simular{\A{A}}{\A{B}}$ denote it.
Let list $B$ denote the projection of tuples over $\A{B}$ after 
ordering them by $\A{A}$ and breaking ties by $\A{B}$,
and $L$ denote a 
LNDS of $B$.
Let $L(\tup{s})$ and $l_i$ denote the index of tuple 
$\tup{s}$ in $L$ (assuming $\tup{s}$ is in $L$) and the 
$i$-th element in $L$, respectively.
Finally, let $\T{s}$ be the minimal removal set 
found using our algorithm; i.e., the set of tuples not in $L$. 

First, we prove that 
$\T{s}$ is a \emph{removal set}. 
Assume $\T{r}\setminus\T{s}\not\models\varphi$. 
Thus, there exist tuples $\tup{s}, \tup{t}\in\T{r}\setminus\T{s}$ 
such that
$\proj{\tup{s}}{\A{A}}<\proj{\tup{t}}{\A{A}}$ and
$\proj{\tup{t}}{\A{B}}<\proj{\tup{s}}{\A{B}}$ (a swap).
Since $\tup{s}, \tup{t}\in \T{r}\setminus\T{s}$, 
$\proj{\tup{s}}{\A{B}}$ and $\proj{\tup{t}}{\A{B}}$ are in $L$.
Moreover, because $\proj{\tup{s}}{\A{A}}<\proj{\tup{t}}{\A{A}}$, 
$\tup{s}$ is before $\tup{t}$ in the ordering and
$L(\tup{s})<L(\tup{t})$. 
However, this means that $L$ is not nondecreasing, as
$\proj{\tup{t}}{\A{B}}<\proj{\tup{s}}{\A{B}}$, 
which contradicts the assumption that $L$ 
is a LNDS of $B$. 
Therefore, $\T{r}\setminus\T{s}\models\varphi$.

Next, we prove that among all sets $\T{t}$ such that 
$\T{r}\setminus\T{t}\models\varphi$, $\T{s}$ has the smallest cardinality; 
i.e., it is a \emph{minimal} removal set.
Assume that a set $\T{s}'$ with $|\T{s}'|<|\T{s}|$ exists,
such that $\T{r}\setminus\T{s}'\models\varphi$. 
Construct a subsequence $L'$ of $B$ by including the projection of tuples
that do \emph{not} exist in $\T{s}'$. 
Because $|\T{s}'|<|\T{s}|$, $L'$ is longer than $L$.
Since $\T{r}\setminus\T{s}'\models\varphi$, 
there do not exist tuples
$\tup{s}, \tup{t}\in\T{r}\setminus\T{s}'$ such that
$\proj{\tup{s}}{\A{A}}<\proj{\tup{t}}{\A{A}}$ and
$\proj{\tup{t}}{\A{B}}<\proj{\tup{s}}{\A{B}}$.
Furthermore, since the tuples are order by $\A{A}$ and 
ties are broken by $\A{B}$, 
$L(\tup{s})<L(\tup{t})$ for all 
$\tup{s}, \tup{t}\in\T{r}\setminus\T{s}'$ such that 
$\proj{\tup{s}}{\A{A}}=\proj{\tup{t}}{\A{A}}$ and
$\proj{\tup{s}}{\A{B}}<\proj{\tup{t}}{\A{B}}$.
Therefore, there do not exist $i<j$ such that
$l'_j<l'_i$. Thus, $L'$ is a nondecreasing subsequence of 
$B$ which is longer than $L$. 
This is in contradiction with $L$ being a 
LNDS of $B$. Therefore, $\T{s}$ has the smallest 
size possible for a removal set.
\end{proof}

\begin{theorem} \label{thm:oc-optimality2}
Algorithm~\ref{alg:optimal} has the optimal runtime for
validating an $\AOC$ candidate.
\end{theorem}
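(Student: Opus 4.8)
The plan is to show that the $\mathcal{O}(n\log n)$ runtime already established for Algorithm~\ref{alg:optimal} is asymptotically the best possible, by proving a matching $\Omega(n\log n)$ lower bound for the $\AOC$ validation problem. Since a lower bound must quantify over all algorithms, I would argue within a comparison-based (algebraic decision tree) model---the same model in which the $\Omega(n\log n)$ bound for computing the length of a longest increasing (equivalently, non-decreasing) subsequence is established in \cite{LIS-lower-bound-fredman1975}. The claim of ``optimal runtime'' should then be read as optimality among comparison-based algorithms.

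The core of the argument is a reduction from the LNDS problem to $\AOC$ validation. Given an arbitrary input sequence $a_1,\dots,a_n$ whose longest non-decreasing subsequence length $L$ we wish to compute, I would construct in $\mathcal{O}(n)$ time a two-column instance over attributes $\A{A}$ and $\A{B}$ with empty context: set $\proj{\tup{t}_i}{\A{A}} = i$ (distinct and already in increasing order) and $\proj{\tup{t}_i}{\A{B}} = a_i$, and take the candidate $\simular{\A{A}}{\A{B}}$. By Theorem~\ref{thm:oc-correctness2}, the minimal removal set for this candidate consists exactly of the tuples outside a LNDS of $\proj{\T{t}}{\A{B}} = (a_1,\dots,a_n)$, so its size is $n-L$ and the reported approximation factor is $e = (n-L)/n$. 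Hence a single call to the validation routine recovers $L = n(1-e)$.

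From this reduction the lower bound follows immediately: any algorithm that validates an $\AOC$ candidate (and reports its approximation factor) in time $T(n)$ yields an algorithm that computes $L$ in time $T(n)+\mathcal{O}(n)$. Were $T(n)=o(n\log n)$, this would contradict the $\Omega(n\log n)$ lower bound of \cite{LIS-lower-bound-fredman1975}; therefore $T(n)=\Omega(n\log n)$, matching the runtime of Algorithm~\ref{alg:optimal} and establishing its optimality.

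The main obstacle I anticipate is not the reduction itself but pinning down the computational model so that the cited lower bound is applicable and the optimality claim is meaningful. I would take care to verify that the construction introduces only linear overhead and no comparisons beyond it, and to address the gap between the \emph{decision} version of validation (testing $e(\varphi)\le\epsilon$) and the version that reports the factor. The cleanest route is to appeal to the approximation factor returned by Algorithm~\ref{alg:optimal}; for the pure decision problem one would instead reduce from the decision version of LNDS, choosing the threshold $\epsilon$ so that a single yes/no answer already determines whether $L$ meets a prescribed target.
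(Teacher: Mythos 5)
Your proposal is correct and follows essentially the same route as the paper: a linear-time reduction that maps a sequence $a_1,\dots,a_n$ to tuples $(i,a_i)$ with candidate $\simular{\A{A}}{\A{B}}$ and empty context, so that the $\Omega(n\log n)$ lower bound of \cite{LIS-lower-bound-fredman1975} transfers to $\AOC$ validation. The paper takes exactly the decision-version path you sketch at the end (reducing from the decision variant with $k=\lfloor 3n^{1/2}\rfloor$ and threshold $\epsilon = 1-k/n$), so your fallback is in fact the paper's main argument.
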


\begin{proof} \label{proof:oc-optimality}
In \cite{LIS-lower-bound-fredman1975}, the author has proved an $\Omega(n\log n)$ lower bound for a decision variant of the LIS problem, here referred to as LIS-DEC, as follows: given a list of $n$ distinct values, is the length of a longest increasing subsequence, denoted by $|L|$, larger than or equal to $k=\lfloor 3n^{1/2}\rfloor$? 
To prove the same lower bound for the $\AOC$ validation problem (and thus, the optimality of our algorithm), we offer a linear-time mapping from instances of LIS-DEC to $\AOC$ validation instances, in which $|L|\ge k$ \emph{iff} the $\AOC$ instance is valid with an approximation threshold of $1-k/n$.

Let $B$ be the input list for a LIS-DEC instance, consisting of $[b_1,$ $b_2,$ $\dots,$ $b_n]$. For the corresponding $\AOC$ instance, let $\T{r}$ be a table with attributes $\A{A}$ and $\A{B}$, and consider the $\OC$ $\simular{\A{A}}{\A{B}}$, denoted by $\varphi$. For each $b_i\in B$, add the tuple $\tup{t}_i$ $=$ $(i, b_i)$ to  $\T{r}$.

We first prove that if the $\AOC$ candidate is valid, then the LIS-DEC instance holds; i.e., if $e(\varphi)\le 1-k/n$, then $|L|\ge k$. Let $\T{s}$ be a minimal removal set with respect to $\simular{\A{A}}{\A{B}}$. Since the $\AOC$ candidate is valid, $|\T{s}|\le n-k$. We now create $L$, a subsequence of $B$, which contains all $b_i$'s where $\tup{t}_i\not\in\T{s}$. Since $|L|=n-|\T{s}|$, we know that $|L|\ge k$. Therefore, we only need to prove that $L$ is an increasing subsequence of $B$. Assume $L$ is not increasing; therefore, there exist $b_i, b_j\in L$ such that $i<j$ and $b_j<b_i$. Since $b_i, b_j\in L$, $\tup{t}_i, \tup{t}_j\in \T{r}\setminus\T{s}$.
However, $\proj{\tup{t}[i]}{\A{A}}<\proj{\tup{t}[j]}{\A{A}}$, while $\proj{\tup{t}[j]}{\A{B}}<\proj{\tup{t}[i]}{\A{B}}$. Thus, there exists a swap with respect to $\simular{\A{A}}{\A{B}}$ in  $\T{r}\setminus\T{s}$, which contradicts the assumption that $\T{s}$ is a removal set w.r.t.\  $\simular{\A{A}}{\A{B}}$.
Therefore, $L$ is an increasing subsequence of $B$ and the LIS-DEC instance holds.

We now prove the other direction, i.e., if the LIS-DEC holds, then the $\AOC$ candidate is valid. Let $L$ be a longest increasing subsequence of $B$, where $|L|\ge k$. We construct a set of tuples $\T{s}$, by including all tuples  $\tup{t}_i\in\T{r}$, where $b_i\not\in L$. We know that $|\T{s}|/n\le 1-k/n$ since $|\T{s}|=n-|L|$. Therefore, to prove that the $\AOC$ candidate is valid,  we only need show that $\T{s}$ is a removal set with respect to $\T{t}$ and $\varphi$; i.e., $\T{r}\setminus\T{s}\models\varphi$. Assume $\T{s}$ is not a removal set. Thus, there exist tuples $\tup{t}[i], \tup{t}[j]\in\T{r}\setminus\T{s}$ such that $i<j$  (and therefore, $\proj{\tup{t}[i]}{\A{A}}<\proj{\tup{t}[j]}{\A{A}}$), but $\proj{\tup{t}[j]}{\A{B}}<\proj{\tup{t}[i]}{\A{B}}$. Since $\tup{t}[i], \tup{t}[j]\in \T{r}\setminus\T{s}$,  $b_i$ and $b_j$ are in $L$. Therefore, there exist values $b_i$ and $b_j$ in $L$ such that $i<j$, but $b_j<b_i$. Thus, $L$ is not increasing,  which is in contradiction with the assumption that $L$ is a LIS of $B$. 
Therefore, $\T{r}\setminus\T{s}\models\varphi$ and the $\AOC$ candidate is valid.

Therefore, the LIS-DEC instance holds if and only if the corresponding $\AOC$ candidate is valid. Since the mapping takes linear time in the size of the input, $\Omega(n\log n)$ is also a lower bound for the $\AOC$ validation problem.
\end{proof}

\end{document}